
\typeout{IJCAI-13 Instructions for Authors}

\documentclass{article}

\usepackage{ijcai13}
\usepackage{times}
\usepackage{amsmath}
\usepackage{amsthm}
\usepackage{amsfonts}
\usepackage{graphicx}
\usepackage{algorithm}
\usepackage{algorithmic}

\newtheorem{proposition}{Proposition}
\newtheorem{theorem}{Theorem}
\newtheorem{corollary}{Corollary}

\title{A Consensual Linear Opinion Pool}

\author{
        Arthur Carvalho \\ 
        Cheriton School of Computer Science\\
        University of Waterloo\\
        a3carval@uwaterloo.ca 
        \And
        Kate Larson \\
        Cheriton School of Computer Science\\
        University of Waterloo\\ 
        klarson@uwaterloo.ca
        }

\begin{document}

\maketitle

\begin{abstract}

An important question when eliciting opinions from experts is how to aggregate the reported opinions. In this paper, we propose a pooling method to aggregate expert opinions. Intuitively, it works as if the experts were continuously updating their opinions in order to accommodate the expertise of others. Each updated opinion takes the form of a linear opinion pool, where the weight that an expert assigns to a peer's opinion is inversely related to the distance between their opinions. In other words, experts are assumed to prefer opinions that are close to their own opinions. We prove that such an updating process leads to consensus, \textit{i.e.}, the experts all converge towards the same opinion. Further, we show that if rational experts are rewarded using the quadratic scoring rule, then the assumption that they prefer opinions that are close to their own opinions follows naturally. We empirically demonstrate the efficacy of the proposed method using real-world data.

\end{abstract}

\section{Introduction}

Predicting outcomes of relevant uncertain events plays an essential role in decision-making processes. For example, companies rely on predictions about consumer demand and material supply to make their production plans, while weather forecasts provide guidelines for long range or seasonal agricultural planning, \textit{e.g.}, farmers can select crops that are best suited to the anticipated climatic conditions.

Forecasting techniques can be roughly divided into statistical and non-statistical methods. Statistical methods require historical data that contain valuable information about the future event. When such data are not available, a widely used non-statistical method is to request \emph{opinions} from experts regarding the future event~\cite{Cook:1991}. Opinions usually take the form of either numerical point estimates or probability distributions over plausible outcomes. We focus on opinions as probability mass functions.

The literature related to expert opinions is typically concerned about how expert opinions are used \cite{Mosleh:1988}, how uncertainty is or should be represented \cite{Ng:1990}, how experts do or should reason with uncertainty \cite{Cook:1991}, how to score the quality and usefulness of expert opinions \cite{Savage:1971,Boutilier:2012}, and how to produce a single consensual opinion when different experts report differing opinions \cite{DeGroot:1974}. It is this last question that we address in this paper.

We propose a pooling method to aggregate expert opinions that works as if the experts were continuously updating their opinions in order to accommodate the expertise and knowledge of others. Each updated opinion takes the form of a linear opinion pool, or a convex combination of opinions, where the weight that an expert assigns to a peer's opinion is inversely related to the distance between their opinions. In other words, experts are assumed to prefer opinions that are close to their own opinions. We prove that such an updating process leads to consensus, \textit{i.e.}, the experts all converge towards the same opinion. We also show that if the opinions of rational experts are scored using the quadratic scoring rule, then the assumption that experts prefer opinions that are close to their own follows naturally.

\section{Related Work}

The aggregation of expert opinions have been extensively studied in computer science and, in particular, artificial intelligence, \textit{e.g.}, the aggregation of opinions represented as preferences over a set of alternatives as in social choice theory \cite{Chevaleyre:2007}, the aggregation of point estimates using non-standard opinion pools \cite{Jurca:2008}, and the aggregation of probabilistic opinions using prediction markets \cite{Chen:2010}.

A traditional way of aggregating probabilistic opinions is through \emph{opinion pooling methods}. These methods are often divided into behavioral and mathematical methods \cite{Clemen:1999}. Behavioral aggregation methods attempt to generate agreement among the experts through interactions in order for them to share and exchange knowledge. Ideally, such sharing of information leads to a consensus. However, these methods typically provide no conditions under which the experts can be expected to reach agreement or even for terminating the iterative process.

On the other hand, mathematical aggregation methods consist of processes or analytical
models that operate on the individual probability distributions in order to produce a single, aggregate probability distribution. An important mathematical method is the \emph{linear opinion pool}, which involves taking a weighted linear average of the opinions \cite{Cook:1991}.

Several interpretations have been offered for the weights in the linear opinion pool. The performance-based approach recommends setting the weights based on previous performance of the experts \cite{Genest:1990}. A caveat with this approach is that performance measurements typically depend on the true outcome of the underlying event, which might not be available at the time when the opinions have to be aggregated. Also, previous successful (respective-ly, unsuccessful) predictions are not necessarily good indicators of future successful (respectively, unsuccessful) ones.

More closely related to this work is the interpretation of weights as a measure of distance. For example, Barlow \textit{et al.} \shortcite{Barlow:1986} proposed that the weight assigned to each expert's opinion should be inversely proportional to its distance to the most distant opinion, where distance is measured according to the Kullback-Leibler divergence. A clear drawback with this approach is that it only considers the distance to the most distant opinion when assigning a weight to an expert's opinion. Thus, even if the majority of experts have similar and accurate opinions, the weights of these experts' opinions in the aggregate prediction can be greatly reduced due to a single distant opinion.

For a comprehensive review of different perspectives on the weights in the linear opinion pool, we refer the interested reader to the work by Genest and McConway \shortcite{Genest:1990}.

\section{Model}

We consider the forecasting setting where a decision maker is interested in a probability vector over a set of mutually exclusive outcomes $\theta_1, \dots, \theta_z$, for $z \geq 2$. The decision maker deems it inappropriate to interject his own judgment about these outcomes. Hence, he elicits probabilistic opinions from $n$ experts. Experts' opinions are represented by $z$-dimensional probability vectors $\mathbf{f}_1, \dots, \mathbf{f}_n$. The probability vector $\mathbf{f}_i = (f_{i,1}, \dots, f_{i,z})$ represents expert $i$'s opinion, where  $f_{i, k}$ is his subjective probability regarding the occurrence of outcome $\theta_k$.

Since experts are not always in agreement, belief aggregation methods are used to combine their opinions into a single probability vector. Formally, $\mathbf{f} = T(\mathbf{f}_1, \dots, \mathbf{f}_n)$, where $\mathbf{f}$ is called an \emph{opinion pool}, and the function $T$ is the pooling operator. The \emph{linear opinion pool}  is a standard approach that involves taking a weighted linear average of the opinions:

\begin{equation}
\label{eq:linear_opinion_pool}
T(\mathbf{f}_1, \dots, \mathbf{f}_n) = \sum_{i=1}^n w_i \mathbf{f}_i
\end{equation}

\noindent where $w_i$ denotes the weight associated with expert $i$'s opinion. We make the standard assumption that $0 \leq w_i \leq 1$, for every $i \in \{1, \dots, n\}$, and $\sum_{i=1}^n w_i = 1$. 


\subsection{Consensus and Weights}

DeGroot \shortcite{DeGroot:1974} proposed a model which describes how a group can reach agreement on a common probability distribution by pooling their individual opinions. Initially, each expert $i$ is informed of the opinion of every other expert. In order to accommodate the information and expertise of the rest of the group, expert $i$ updates his own opinion as follows:

\begin{equation*}
\mathbf{f}_i^{(1)} = \sum_{j=1}^n p_{i,j}\mathbf{f}_j
\end{equation*}

\noindent where $p_{i,j}$ is the weight that expert $i$ assigns to the opinion of expert $j$ when he carries out this update. Weights must be chosen on the basis of the relative importance that experts assign to their peers' opinions.  It is assumed that $p_{i,j} > 0$, for every expert $i$ and $j$, and $\sum_{j=1}^n p_{i,j} = 1$. In this way, each updated opinion takes the form of a linear opinion pool. The whole updating process can be written in a slightly more general form using matrix notation, \textit{i.e.}, $\mathbf{F}^{(1)} = \mathbf{P}\mathbf{F}^{(0)}$, where:

\begin{align*}
\mathbf{P} &= \left[\begin{array}{cccc}
p_{1,1} & p_{1, 2} & \cdots & p_{1, n} \\
p_{2,1} & p_{2, 2} & \cdots & p_{2, n} \\
\vdots  & \vdots  & \ddots & \vdots  \\
p_{n,1} & p_{n, 2} & \cdots & p_{n, n} \\
\end{array}
\right], \qquad \mbox{and} \\
\mathbf{F}^{(0)} 
&=
\left[\begin{array}{c}
\mathbf{f}_{1} \\
\mathbf{f}_{2} \\
\vdots\\
\mathbf{f}_{n} \\
\end{array} 
\right]
=
\left[\begin{array}{cccc}
f_{1,1}  & f_{1, 2} & \cdots & f_{1, z} \\
f_{2,1}  & f_{2, 2} & \cdots & f_{2, z} \\
\vdots  & \vdots  & \ddots & \vdots  \\
f_{n,1}  & f_{n, 2} & \cdots & f_{n, z} \\
\end{array}
\right]
\end{align*}

Since all the opinions have changed, the experts might wish to revise their new opinions in the same way as they did before. If there is no basis for the experts to change their weights, we can then represent the whole updating process after $t$ revisions, for $t \geq 1$, as follows:

\begin{equation}
\label{eq:updating_opinions}
\mathbf{F}^{(t)} = \mathbf{P}\mathbf{F}^{(t-1)} = \mathbf{P}^{t}\mathbf{F}^{(0)}
\end{equation}

Let $\mathbf{f}^{(t)}_i = \left(f^{(t)}_{i,1}, \dots, f^{(t)}_{i,z}\right)$ be expert $i$'s opinion after $t$ updates, \textit{i.e.}, it denotes the $i$th row of the matrix $\mathbf{F}^{(t)}$. We say that a \emph{consensus} is reached if $\mathbf{f}^{(t)}_i = \mathbf{f}^{(t)}_j$, for every expert $i$ and $j$, as $t\rightarrow\infty$. Since $\mathbf{P}$, the matrix with weights, is a $n\times n$ stochastic matrix, it can then be regarded as the one-step transition probability matrix of a Markov chain with $n$ states and stationary probabilities. Consequently, one can apply a limit theorem that says that a consensus is reached when there exists a positive integer $t$ such that every element in at least one column of the matrix $\mathbf{P}^{t}$ is positive \cite{DeGroot:1974}.


\subsection{Weights as a Measure of Distance}

The original method proposed by DeGroot \shortcite{DeGroot:1974} has some drawbacks. First, the experts might want to change the weights that they assign to their peers' opinions after learning their initial opinions or after observing how much the opinions have changed from stage to stage. Further, opinions and/or identities have to be disclosed to the whole group when the experts are assigning the weights. Hence, privacy is not preserved, a fact which might be troublesome when the underlying event is of a sensitive nature.

In order to tackle these problems, we derive the weights that experts assign to the reported opinions by interpreting each weight as a measure of distance between two opinions.  We start by making the assumption that experts prefer opinions that are close to their own opinions, where closeness is measured by the following distance function:
\begin{equation}
\label{eq:root_mean_squared_deviation}
D(\mathbf{f}_i, \mathbf{f}_j) = \sqrt{\frac{\sum_{k=1}^{z} (f_{i,k} - f_{j,k})^2 }{z}}
\end{equation}
\noindent \textit{i.e.}, it is the root-mean-square deviation between two opinions $\mathbf{f}_i$ and $\mathbf{f}_j$. Given the above assumption, one can estimate the weight that expert $i$ assigns to expert $j$'s opinion at a given time $t$, for $t \geq 1$, as follows:
\begin{equation}
\label{eq:weight}
p_{i,j}^{(t)} = \frac{\alpha_i^{(t)}}{\epsilon + D\left(\mathbf{f}_i^{(t-1)}, \mathbf{f}_j^{(t-1)}\right)}
\end{equation}
\noindent where $\alpha_i^{(t)}$ normalizes the weights so that they sum to one, and $\epsilon$ is a small, positive constant used to avoid division by zero. We set $\mathbf{f}_i^{(0)} = \mathbf{f}_i$, \textit{i.e.}, it is the original opinion reported by expert $i$. There are some important points regarding equation (\ref{eq:weight}). First, the distance between two opinions is always non-negative. Hence, the constant $\epsilon$ ensures that every single weight is strictly greater than $0$ and strictly less than $1$. Further, the closer the opinions $\mathbf{f}_i^{(t-1)}$ and $\mathbf{f}_j^{(t-1)}$ are, the higher the resulting weight $p_{i,j}^{(t)}$ will be. Since $D\left(\mathbf{f}_i^{(t-1)}, \mathbf{f}_i^{(t-1)}\right) = 0$, the weight that each expert assigns to his own opinion is always greater than or equal to the weights that he assigns to his peers' opinions.

Now, we can redefine equation (\ref{eq:updating_opinions}) so as to allow the experts to update their weights based on the most recent opinions. After $t$ revisions, for $t \geq 1$, we have that $\mathbf{F}^{(t)} = \mathbf{P}^{(t)}\mathbf{F}^{(t-1)} = \mathbf{P}^{(t)}\mathbf{P}^{(t-1)} \dots \mathbf{P}^{(1)}\mathbf{F}^{(0)}$, where each element of each matrix  $\mathbf{P}^{(k)}$ is computed according to equation (\ref{eq:weight}):
\begin{align*}
\mathbf{P}^{(k)} &= \left[\begin{array}{cccc}
p_{1,1}^{(k)} & p_{1, 2}^{(k)} & \cdots & p_{1, n}^{(k)} \\
p_{2,1}^{(k)} & p_{2, 2}^{(k)} & \cdots & p_{2, n}^{(k)} \\
\vdots  & \vdots  & \ddots & \vdots  \\
p_{n,1}^{(k)} & p_{n, 2}^{(k)} & \cdots & p_{n, n}^{(k)} \\
\end{array}
\right]
\end{align*}
The opinion of each expert $i$ at time $t$ then becomes $\mathbf{f}_i^{(t)} = \sum_{j=1}^n p_{i,j}^{(t)}\mathbf{f}_j^{(t-1)}$. Algorithm 1 provides an algorithmic description of the proposed method.

In order to prove that all opinions converge towards a consensual opinion when using the proposed method, consider the following functions:
\begin{align*}
\delta\left(\mathbf{U}\right) &= \frac{1}{2}\max_{i,j}\sum_{k=1}^z \left| u_{i,k} - u_{j,k}\right| \\
\gamma(\mathbf{U}) &= \min_{i, j}\sum_{k=1}^z \min(u_{i, k}, u_{j, k})
\end{align*}

\noindent where $0 \leq \delta\left(\mathbf{U}\right), \gamma(\mathbf{U}) \leq 1$, and $\mathbf{U}$ is a stochastic matrix. $\delta\left(\mathbf{U}\right)$ computes the maximum absolute difference between two rows of a stochastic matrix $\mathbf{U}$. Thus, when $\delta\left(\mathbf{F}^{(t)}\right) = 0$, all rows of $\mathbf{F}^{(t)}$ are the same, \textit{i.e.}, a consensus is reached. We use the following results in our proof \cite{Paz:1971}:

\begin{algorithm}[H]
\caption{Algorithmic description of the proposed method to find a consensual opinion.}
\begin{algorithmic}[1] 
\REQUIRE $n$ probability vectors $\mathbf{f}_1^{(0)}, \dots, \mathbf{f}_n^{(0)}$.
\REQUIRE recalibration factor $\epsilon$.
\FOR{$ t = 1 \textbf{\textrm{ to }} \infty$}  
\FOR{$ i = 1 \textbf{\textrm{ to }} n$} 
\FOR{$ j = 1 \textbf{\textrm{ to }} n$} 
\STATE $p_{i,j}^{(t)} = \frac{\alpha_i^{(t)}}{\epsilon + D\left(\mathbf{f}_i^{(t-1)}, \mathbf{f}_j^{(t-1)}\right)}$  
\ENDFOR
\STATE $\mathbf{f}_i^{(t)} = \sum_{j=1}^n p_{i,j}^{(t)}\mathbf{f}_j^{(t-1)}$
\ENDFOR
\ENDFOR
\end{algorithmic}
\end{algorithm}

\begin{proposition}
Given two stochastic matrices $\mathbf{U}$ and $\mathbf{V}$, $\delta(\mathbf{UV}) \leq \delta(\mathbf{U})\delta(\mathbf{V})$.
\end{proposition}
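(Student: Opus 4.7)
The plan is to work with a fixed pair of rows $i,j$ of $\mathbf{W}=\mathbf{UV}$ and show that $\tfrac12\sum_k|w_{i,k}-w_{j,k}|\le\delta(\mathbf{U})\delta(\mathbf{V})$; taking a maximum over $i,j$ then yields the claim. First I would write
\[
w_{i,k}-w_{j,k}=\sum_{l}(u_{i,l}-u_{j,l})\,v_{l,k},
\]
and split the index set according to the sign of $a_l:=u_{i,l}-u_{j,l}$, letting $L^{+}=\{l:a_l>0\}$ and $L^{-}=\{l:a_l<0\}$. Because $\mathbf{U}$ is stochastic, both rows sum to $1$, so $\sum_{l}a_l=0$; hence $S:=\sum_{l\in L^{+}}a_l=-\sum_{l\in L^{-}}a_l=\tfrac12\sum_{l}|a_l|$, and by definition of $\delta$ we have $S\le\delta(\mathbf{U})$.

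Next I would renormalize the positive and negative parts into convex weights $\alpha_l=a_l/S$ on $L^{+}$ and $\beta_l=-a_l/S$ on $L^{-}$, and define the probability vectors $p_k=\sum_{l\in L^{+}}\alpha_l v_{l,k}$ and $q_k=\sum_{l\in L^{-}}\beta_l v_{l,k}$, so that $w_{i,k}-w_{j,k}=S\,(p_k-q_k)$. It follows that
\[
\tfrac12\sum_{k}|w_{i,k}-w_{j,k}|=S\cdot\tfrac12\sum_{k}|p_k-q_k|.
\]

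The main work is now bounding $\tfrac12\sum_k|p_k-q_k|$ by $\delta(\mathbf{V})$. I would use the standard coupling trick: since $\sum_l\alpha_l=\sum_m\beta_m=1$,
\[
p_k-q_k=\sum_{l\in L^{+}}\sum_{m\in L^{-}}\alpha_l\beta_m\,(v_{l,k}-v_{m,k}),
\]
so by the triangle inequality
\[
\tfrac12\sum_{k}|p_k-q_k|\le\sum_{l,m}\alpha_l\beta_m\cdot\tfrac12\sum_{k}|v_{l,k}-v_{m,k}|\le\delta(\mathbf{V}).
\]
This is the step I expect to be the trickiest, since one must notice that expressing the difference of two mixtures via the product measure $\alpha_l\beta_m$ reduces the bound to a convex combination of pairwise row distances of $\mathbf{V}$, each of which is at most $\delta(\mathbf{V})$.

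Combining the two bounds gives $\tfrac12\sum_k|w_{i,k}-w_{j,k}|\le S\cdot\delta(\mathbf{V})\le\delta(\mathbf{U})\delta(\mathbf{V})$ for arbitrary $i,j$, and taking the maximum over row pairs yields $\delta(\mathbf{UV})\le\delta(\mathbf{U})\delta(\mathbf{V})$, completing the proof.
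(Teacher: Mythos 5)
Your proof is correct. One point of comparison is moot here: the paper does not prove this proposition at all --- it imports it as a known result from Paz (1971), stating only ``We use the following results in our proof''. So there is no in-paper argument to measure yours against; what you have written is a complete, self-contained proof of the standard submultiplicativity of the Dobrushin ergodicity coefficient, and every step checks out: the sign decomposition of $a_l = u_{i,l}-u_{j,l}$, the identity $S = \tfrac12\sum_l|a_l| \le \delta(\mathbf{U})$ from the rows of $\mathbf{U}$ summing to one, the product-measure coupling $p_k - q_k = \sum_{l,m}\alpha_l\beta_m(v_{l,k}-v_{m,k})$, and the final bound by a convex combination of pairwise row distances of $\mathbf{V}$. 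The only cosmetic omission is the degenerate case $S=0$, where the normalization $\alpha_l = a_l/S$ is undefined; but then rows $i$ and $j$ of $\mathbf{U}$ coincide, so $w_{i,k}=w_{j,k}$ for all $k$ and the inequality holds trivially. A one-line remark disposing of that case would make the argument airtight.
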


\begin{proposition}
Given a stochastic matrix $\mathbf{U}$, then $\delta(\mathbf{U}) = 1 - \gamma(\mathbf{U})$.
\end{proposition}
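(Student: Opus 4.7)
The plan is to establish the identity pointwise for each coordinate and then sum over $k$, exploiting the fact that the rows of a stochastic matrix sum to one. The key elementary observation is that for any two non-negative reals $a$ and $b$,
\begin{equation*}
|a - b| \;=\; a + b - 2\min(a, b),
\end{equation*}
which follows from writing $|a-b| = \max(a,b) - \min(a,b)$ together with $a + b = \max(a,b) + \min(a,b)$.

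Next, I would apply this identity coordinatewise to the two rows $\mathbf{u}_i$ and $\mathbf{u}_j$ of $\mathbf{U}$ and sum over $k = 1, \dots, z$. Since $\mathbf{U}$ is stochastic, $\sum_{k=1}^{z} u_{i,k} = \sum_{k=1}^{z} u_{j,k} = 1$, and therefore
\begin{equation*}
\sum_{k=1}^{z} |u_{i,k} - u_{j,k}| \;=\; 2 - 2\sum_{k=1}^{z} \min(u_{i,k}, u_{j,k}).
\end{equation*}
Dividing by $2$ gives $\tfrac{1}{2}\sum_{k} |u_{i,k} - u_{j,k}| = 1 - \sum_{k} \min(u_{i,k}, u_{j,k})$, so each pairwise quantity that appears inside the $\max$ defining $\delta(\mathbf{U})$ equals one minus the corresponding quantity inside the $\min$ defining $\gamma(\mathbf{U})$.

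Finally, I would take the maximum over pairs $(i,j)$. Because $x \mapsto 1 - x$ is order-reversing, maximizing the left-hand side corresponds to minimizing $\sum_{k} \min(u_{i,k}, u_{j,k})$ on the right, so
\begin{equation*}
\delta(\mathbf{U}) \;=\; \max_{i,j}\Bigl(1 - \sum_{k=1}^{z} \min(u_{i,k}, u_{j,k})\Bigr) \;=\; 1 - \min_{i,j}\sum_{k=1}^{z} \min(u_{i,k}, u_{j,k}) \;=\; 1 - \gamma(\mathbf{U}),
\end{equation*}
which is the claim. There is no real obstacle here; the entire argument rests on the pointwise identity and the row-sum normalization, and the rest is bookkeeping about how $\max$ interacts with the sign flip.
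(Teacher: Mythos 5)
Your proof is correct. Note that the paper does not actually prove this proposition; it cites it as a known result from Paz (1971), so there is no in-paper argument to compare against. Your derivation --- the pointwise identity $|a-b| = a+b-2\min(a,b)$, summation using the row-sum normalization, and the observation that $x \mapsto 1-x$ converts the outer $\max$ into a $\min$ --- is the standard argument for this identity, and every step checks out. The only cosmetic remark is that the pointwise identity holds for arbitrary reals, not just non-negative ones, though non-negativity is of course available here since $\mathbf{U}$ is stochastic.
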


Our main result is stated below.

\begin{theorem}
When $t \rightarrow \infty$, $\mathbf{f}^{(t)}_i = \mathbf{f}^{(t)}_j$, for every expert $i$ and $j$.
\end{theorem}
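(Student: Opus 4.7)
The plan is to show $\delta(\mathbf{F}^{(t)}) \to 0$ as $t \to \infty$, since by the definition of $\delta$ this is exactly the statement that all rows of $\mathbf{F}^{(t)}$ coincide, i.e., a consensus is reached. First I would apply Proposition 1 iteratively to the identity $\mathbf{F}^{(t)} = \mathbf{P}^{(t)}\mathbf{P}^{(t-1)}\cdots\mathbf{P}^{(1)}\mathbf{F}^{(0)}$, yielding
\begin{equation*}
\delta\!\left(\mathbf{F}^{(t)}\right) \;\le\; \delta\!\left(\mathbf{F}^{(0)}\right)\prod_{k=1}^{t}\delta\!\left(\mathbf{P}^{(k)}\right).
\end{equation*}
By Proposition 2, $\delta(\mathbf{P}^{(k)}) = 1 - \gamma(\mathbf{P}^{(k)})$, so it suffices to exhibit a constant $\gamma^{\star} > 0$, independent of $k$, with $\gamma(\mathbf{P}^{(k)}) \ge \gamma^{\star}$.

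The key step, which I expect to be the main obstacle, is bounding $\gamma(\mathbf{P}^{(k)})$ uniformly away from zero. I would first observe that because each opinion $\mathbf{f}_i^{(k-1)}$ is a probability vector, every component lies in $[0,1]$, and hence $D(\mathbf{f}_i^{(k-1)}, \mathbf{f}_j^{(k-1)}) \le 1$ uniformly in $i,j,k$. This gives the universal bound $\epsilon + D \le \epsilon + 1$, so each unnormalised weight is at least $1/(\epsilon+1)$, while the normalising denominator $\sum_j 1/(\epsilon + D(\mathbf{f}_i^{(k-1)}, \mathbf{f}_j^{(k-1)}))$ is at most $n/\epsilon$. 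Combining these gives the uniform lower bound
\begin{equation*}
p_{i,\ell}^{(k)} \;=\; \frac{\alpha_i^{(k)}}{\epsilon + D(\mathbf{f}_i^{(k-1)}, \mathbf{f}_\ell^{(k-1)})} \;\ge\; \frac{\epsilon}{n(\epsilon+1)} \;=:\; c.
\end{equation*}

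With this in hand, for any two rows $i,j$ of $\mathbf{P}^{(k)}$,
\begin{equation*}
\sum_{\ell=1}^{n} \min\!\left(p_{i,\ell}^{(k)}, p_{j,\ell}^{(k)}\right) \;\ge\; n c \;=\; \frac{\epsilon}{\epsilon+1},
\end{equation*}
so $\gamma(\mathbf{P}^{(k)}) \ge \gamma^{\star} := \epsilon/(\epsilon+1) > 0$ for every $k$. Substituting back,
\begin{equation*}
\delta\!\left(\mathbf{F}^{(t)}\right) \;\le\; \delta\!\left(\mathbf{F}^{(0)}\right) (1 - \gamma^{\star})^{t} \;\xrightarrow[t\to\infty]{}\; 0,
\end{equation*}
which forces $\mathbf{f}_i^{(t)} = \mathbf{f}_j^{(t)}$ for all $i,j$ in the limit. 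The crux of the argument is that fixing $\epsilon > 0$ in equation (\ref{eq:weight}) simultaneously bounds every entry of $\mathbf{P}^{(k)}$ away from zero, independently of how the opinions evolve; this is precisely what converts the single-step contraction in Proposition 1 into geometric convergence to consensus.
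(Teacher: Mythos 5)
Your proof is correct, and it follows the same skeleton as the paper's: both reduce consensus to showing $\delta\left(\mathbf{F}^{(t)}\right) \to 0$, and both drive the argument with Propositions 1 and 2 applied to $\mathbf{F}^{(t)} = \mathbf{P}^{(t)}\mathbf{F}^{(t-1)}$. The difference lies in how the limit is identified as zero, and here your version is genuinely stronger. The paper only establishes that the sequence $\delta\left(\mathbf{F}^{(0)}\right), \delta\left(\mathbf{F}^{(1)}\right), \dots$ is non-increasing and bounded below by $0$, and then invokes the monotone convergence theorem to conclude $\delta\left(\mathbf{F}^{(\infty)}\right) = 0$. That theorem only guarantees convergence to \emph{some} limit in $\left[0, \delta\left(\mathbf{F}^{(0)}\right)\right]$, not to zero, so the paper's proof as written has a gap: one must additionally rule out the possibility that the per-step contraction factors $1 - \gamma\left(\mathbf{P}^{(t)}\right)$ approach $1$ fast enough for the infinite product to stay positive. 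Your uniform entrywise bound $p_{i,\ell}^{(k)} \geq \epsilon/\left(n(\epsilon+1)\right)$ --- which is valid, since $0 \leq D \leq 1$ for probability vectors and the normaliser is at most $n/\epsilon$ --- yields $\gamma\left(\mathbf{P}^{(k)}\right) \geq \epsilon/(\epsilon+1)$ independently of $k$, and hence the geometric decay $\delta\left(\mathbf{F}^{(t)}\right) \leq \delta\left(\mathbf{F}^{(0)}\right)(\epsilon+1)^{-t} \to 0$. This is exactly the ingredient needed to close the gap: it converts one-step non-expansiveness into a uniform strict contraction, and as a bonus it gives an explicit convergence rate, which the paper's argument does not provide.
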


\begin{proof}
Recall that $\mathbf{F}^{(t)}$ is the stochastic matrix representing the experts' opinions after $t$ revisions, and that $\mathbf{F}^{(t)} = \mathbf{P}^{(t)}\mathbf{F}^{(t-1)}$. Now, consider the following sequence: $\left(\delta\left(\mathbf{F}^{(0)}\right), \delta\left(\mathbf{F}^{(1)}\right), ..., \delta\left(\mathbf{F}^{(t)}\right)\right)$. We are interested in the behavior of this sequence when $t \rightarrow \infty$. First, we show that such a sequence is monotonically decreasing:

\begin{align*}
\delta\left(\mathbf{F}^{(t)}\right) &= \delta\left(\mathbf{P}^{(t)}\mathbf{F}^{(t-1)}\right)\\ 
& \leq \delta\left(\mathbf{P}^{(t)}\right)\delta\left(\mathbf{F}^{(t-1)}\right)\\
&= \left(1 - \gamma\left(\mathbf{P}^{(t)}\right)\right) \delta\left(\mathbf{F}^{(t-1)}\right)\\
&\leq \delta\left(\mathbf{F}^{(t-1)}\right)
\end{align*}

The second and third lines follow, respectively, from Propositions 1 and 2. Since $\delta\left(\mathbf{U}\right) \geq 0$ for every stochastic matrix $\mathbf{U}$, then the above mentioned sequence is a bounded decreasing sequence. Hence, we can apply the standard monotone convergence theorem \cite{Bartle:2000} and $\delta\left(\mathbf{F}^{(\infty)}\right) = 0$. Consequently, all rows of the stochastic matrix $\mathbf{F}^{(\infty)}$ are the same.
\end{proof}

In other words, a consensus is always reached under the proposed method, and this does not depend on the initial reported opinions. A straightforward corollary of Theorem 1 is that all revised weights converge to the same value.

\begin{corollary}
When $t \rightarrow \infty,  p_{i,j}^{(t)} = \frac{1}{n}, $ for every expert $i$ and $j$.
\end{corollary}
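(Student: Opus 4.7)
The plan is to invoke Theorem~1 to kill the distance terms and then read off the limit of the explicit formula for $p_{i,j}^{(t)}$. By Theorem~1, all rows of $\mathbf{F}^{(t)}$ converge to a common limit vector $\mathbf{f}^{\star}$, so for every pair $i,j$ we have $D(\mathbf{f}_i^{(t-1)}, \mathbf{f}_j^{(t-1)}) \to 0$ as $t \to \infty$.

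Next, I would make the normalizer explicit. Since $\sum_{j=1}^n p_{i,j}^{(t)} = 1$, the definition in equation~(\ref{eq:weight}) forces
\[
\alpha_i^{(t)} \;=\; \left(\sum_{j=1}^n \frac{1}{\epsilon + D\!\left(\mathbf{f}_i^{(t-1)}, \mathbf{f}_j^{(t-1)}\right)}\right)^{\!-1}.
\]
Because $\epsilon > 0$, the summands are uniformly bounded (between $0$ and $1/\epsilon$), so I can safely pass the limit through the sum. As $t \to \infty$, each denominator tends to $\epsilon$, hence $\alpha_i^{(t)} \to \epsilon/n$, and therefore
\[
p_{i,j}^{(t)} \;=\; \frac{\alpha_i^{(t)}}{\epsilon + D\!\left(\mathbf{f}_i^{(t-1)}, \mathbf{f}_j^{(t-1)}\right)} \;\longrightarrow\; \frac{\epsilon/n}{\epsilon} \;=\; \frac{1}{n}.
\]

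There is no real obstacle here: the only subtlety is justifying the interchange of limit and the finite sum/quotient defining $\alpha_i^{(t)}$, which is immediate because $\epsilon$ keeps every denominator bounded away from zero. The conclusion is that, once consensus has been reached, every expert treats every peer (and himself) equivalently, assigning the uniform weight $1/n$.
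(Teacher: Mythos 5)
Your proof is correct and is exactly the ``straightforward'' argument the paper intends (the paper states this corollary without proof): Theorem~1 drives every pairwise distance $D(\mathbf{f}_i^{(t-1)},\mathbf{f}_j^{(t-1)})$ to zero, so each denominator in equation~(\ref{eq:weight}) tends to $\epsilon$, the normalization forces $\alpha_i^{(t)}\to\epsilon/n$, and hence $p_{i,j}^{(t)}\to 1/n$. Writing the normalizer out explicitly and observing that $\epsilon>0$ keeps every denominator bounded away from zero is all the justification the limit interchange needs.
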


Hence, the proposed method works as if experts were continuously exchanging information so that their individual knowledge becomes group knowledge and all opinions are equally weighted. Since we derive weights from the reported opinions, we are then able to avoid some problems that might arise when eliciting these weights directly, \textit{e.g.}, opinions do not need to be disclosed to others in order for them to assign weights, thus preserving privacy.

The resulting consensual opinion can be represented as an instance of the linear opinion pool. Recall that $\mathbf{f}_i^{(t)} = \sum_{j=1}^n p_{i,j}^{(t)}\mathbf{f}_j^{(t-1)} = \sum_{j=1}^n p_{i,j}^{(t)}\sum_{k=1}^n p_{j,k}^{(t-1)} \mathbf{f}_k^{(t-2)} = \dots =  \sum_{j=1}^n \beta_j\mathbf{f}_j^{(0)}$, where $\mathbf{\beta} = (\beta_1, \beta_2, \dots, \beta_n)$ is a probability vector that incorporates all the previous weights. Hence, another interpretation of the proposed method is that experts reach a consensus regarding the weights in equation (\ref{eq:linear_opinion_pool}).

\subsection{Numerical Example}

A numerical example may clarify the mechanics of the proposed method. Consider three experts ($n=3$) with the following opinions: $\mathbf{f}_1 = (0.9, 0.1)$, $\mathbf{f}_2 = (0.05, 0.95)$, and $\mathbf{f}_3 = (0.2, 0.8)$. According to (\ref{eq:root_mean_squared_deviation}), the initial distance between, say, $\mathbf{f}_1$ and $\mathbf{f}_2$ is:

\begin{displaymath}
D(\mathbf{f}_1, \mathbf{f}_2) = \sqrt{\frac{(0.9 - 0.05)^2 + (0.1-0.95)^2}{2}} = 0.85
\end{displaymath}

Similarly, we have that $D(\mathbf{f}_1, \mathbf{f}_1) = 0$ and $D(\mathbf{f}_1, \mathbf{f}_3)  = 0.7$. Using equation (\ref{eq:weight}), we can then derive the weights that each expert assigns to the reported opinions. Focusing on expert $1$ at time $t = 1$ and setting $\epsilon = 0.01$, we obtain $p_{1,1}^{(1)} = \alpha_1^{(1)}/0.01$, $p_{1,2}^{(1)} = \alpha_1^{(1)}/0.86$, and $p_{1,3}^{(1)} = \alpha_1^{(1)}/0.71$. Since these weights must sum to one, we have $\alpha_1^{(1)} \approx 0.00975$ and, consequently, $p_{1,1}^{(1)} \approx 0.975$, $p_{1,2}^{(1)} \approx 0.011$, and $p_{1,3}^{(1)} \approx 0.014$. Repeating the same procedure for all experts, we obtain the matrix:

\begin{align*}
\mathbf{P}^{(1)} &= \left[\begin{array}{cccc}
0.975 & 0.011 & 0.014\\
0.011 & 0.931 & 0.058\\
0.013 & 0.058 & 0.929
\end{array}
\right]
\end{align*}

The updated opinion of expert $1$  is then $\mathbf{f}_1^{(1)} = \sum_{j=1}^3 p_{1,j}^{(1)}\mathbf{f}_j \approx (0.8809, 0.1191)$. By repeating the above procedure, when $t \rightarrow \infty$, $\mathbf{P}^{(t)}$ converges to a matrix where all the elements are equal to $1/3$. Moreover, all experts' opinions converge to the prediction $(0.3175, 0.6825)$. An interesting point to note is that the resulting prediction would be $(0.3833, 0.6167)$ if we had taken the average of the reported opinions, \textit{i.e.}, expert 1, who has a very different opinion, would have more influence on the aggregate prediction.

\section{Consensus and Proper Scoring Rules}

The major assumption of the proposed method is that experts prefer opinions that are close to their own opinions. In this section, we formally investigate the validity of this assumption. We start by noting that in the absence of a well-chosen incentive structure, the experts might indulge in game playing which distorts their reported opinions. For example, experts who have a reputation to protect might tend to produce forecasts near the most likely group consensus, whereas experts who have a reputation to build might tend to overstate the probabilities of outcomes they feel will be understated in a possible consensus~\cite{Friedman:1983}.

\emph{Scoring rules} are traditional devices used to promote honesty in forecasting settings \cite{Savage:1971}. Formally, a scoring rule is a real-valued function, $R(\mathbf{f}_i, e)$, that provides a score for the opinion $\mathbf{f}_i$ upon observing the outcome $\theta_e$. Given that experts' utility functions are linear with respect to the range of the score used in conjunction with the scoring rule, the condition that $R$ is \emph{strictly proper} implies that the opinion reported by each expert strictly maximizes his expected utility if and only if he is honest. Formally, $\mbox{argmax}_{\mathbf{f}_i^\prime} \mathbb{E}_{\mathbf{f}_i} \left[R(\mathbf{f}_i^\prime)\right] = \mathbf{f}_i$, where $\mathbb{E}_{\mathbf{f}_i} \left[R(\cdot)\right]$ is the $\mathbf{f}_i$-expected value of $R$, $\mathbb{E}_{\mathbf{f}_i} \left[R(\mathbf{f}_i^\prime)\right] = \sum_{e = 1}^z {f}_{i, e}\, R(\mathbf{f}_i^\prime, e)$. A well-known strictly proper scoring rule is the \emph{quadratic scoring rule}:
\begin{equation}
\label{eq:quadratic_scoring_rule}
R(\mathbf{f}_i, e) = 2f_{i, e} - \sum_{k=1}^z f_{i, k}^2
\end{equation}
The proof that the quadratic scoring rule is indeed strictly proper as well as some of its interesting properties can be seen in the work by Selten \shortcite{Selten:1998}.

Proper scoring rules have been used as a tool to promote truthfulness in a variety of domains, \textit{e.g.}, when sharing rewards among a set of agents based on peer evaluations \cite{Carvalho:2010,Carvalho:2011,Carvalho:2012}, to incentivize agents to accurately estimate their own efforts to accomplish a task \cite{Bacon:2012}, in prediction markets \cite{Hanson:2003}, in weather forecasting \cite{Gneiting:2007}, \textit{etc}.

\subsection{Effective Scoring Rules}

Scoring rules can also be classified based on monotonicity properties. Consider a metric $G$ that assigns to any pair of opinions $\mathbf{f}_i$ and $\mathbf{f}_j$ a real number, which in turn can be seen as the shortest distance between $\mathbf{f}_i$ and $\mathbf{f}_j$. We say that a scoring rule $R$ is \emph{effective} with respect to $G$ if the following relation holds for any opinions $\mathbf{f}_i, \mathbf{f}_j$, and $\mathbf{f}_k$~\cite{Friedman:1983}:
\begin{equation*}
G(\mathbf{f}_i, \mathbf{f}_j) < G(\mathbf{f}_i, \mathbf{f}_k)
\iff
\mathbb{E}_{\mathbf{f}_i} \left[R(\mathbf{f}_j)\right] > \mathbb{E}_{\mathbf{f}_i} \left[R(\mathbf{f}_k)\right] 
\end{equation*}
In words, each expert's expected score can be seen as a monotone decreasing function of the distance between his true opinion and the reported one, \textit{i.e.}, experts still strictly maximize their expected scores by telling the truth, and the closer a reported opinion is to the true opinion, the higher the expected score will be. The property of effectiveness is stronger than strict properness, and it has been proposed as a desideratum for scoring rules for reasons of monotonicity in keeping an expert close to his true opinion \cite{Friedman:1983}.

By definition, a metric $G$ must satisfy the following conditions for any opinions $\mathbf{f}_i, \mathbf{f}_j$, and $\mathbf{f}_k$:

\begin{enumerate}
\item Positivity: $G(\mathbf{f}_i, \mathbf{f}_j) \geq 0$, for all experts $i, j$, and $G(\mathbf{f}_i, \mathbf{f}_j) = 0$ if and only if $\mathbf{f}_i = \mathbf{f}_j$;
\item Symmetry: $G(\mathbf{f}_i, \mathbf{f}_j) = G(\mathbf{f}_j, \mathbf{f}_i)$;
\item Triangle Inequality: $G(\mathbf{f}_i, \mathbf{f}_k) \leq G(\mathbf{f}_i, \mathbf{f}_j) + G(\mathbf{f}_j, \mathbf{f}_k)$.
\end{enumerate}

The root-mean-square deviation shown in (\ref{eq:root_mean_squared_deviation}) satisfies the above conditions. However, equation (\ref{eq:weight}), taken as a function of opinions, is not a true metric, \textit{e.g.}, symmetry does not always hold. We adjust the original definition of effective scoring rules so as to consider weights instead of metrics. We say that a scoring rule $R$ is effective with respect to a set of weights $W = \{p_{1,1}^{(t)}, \dots, p_{1,n}^{(t)}, p_{2,1}^{(t)}, \dots, p_{n,n}^{(t)}\}$ assigned at any time $t \geq 1$ if the following relation holds for any opinions $\mathbf{f}_i^{(t-1)}, \mathbf{f}_j^{(t-1)}$, and $\mathbf{f}_k^{(t-1)}$:

\begin{equation*}
p_{i,j}^{(t)} < p_{i,k}^{(t)} \iff 
\mathbb{E}_{\mathbf{f}_i^{(t-1)}} [R (\mathbf{f}_k^{(t-1)})] > \mathbb{E}_{\mathbf{f}_i^{(t-1)}} [R (\mathbf{f}_j^{(t-1)})]
\end{equation*}

In words, each expert's expected score can be seen as a monotone increasing function of his assigned weights, \textit{i.e.}, the higher the weight one expert assigns to a peer's opinion, the greater the expected score of that expert would be if he reported his peer's opinion, and vice versa. We prove below that the quadratic scoring rule shown in (\ref{eq:quadratic_scoring_rule}) is effective with respect to a set of weights assigned according to (\ref{eq:weight}).

\begin{proposition}
The quadratic scoring rule shown in (\ref{eq:quadratic_scoring_rule}) is effective with respect to a set of weights $W = \{p_{1,1}^{(t)}, \dots, p_{1,n}^{(t)}, p_{2,1}^{(t)}, \dots, p_{n,n}^{(t)}\}$ assigned at any time $t \geq 1$ according to equation (\ref{eq:weight}). 
\end{proposition}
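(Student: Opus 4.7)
The plan is to show that both the weight $p_{i,j}^{(t)}$ and the expected score $\mathbb{E}_{\mathbf{f}_i^{(t-1)}}[R(\mathbf{f}_j^{(t-1)})]$ are monotone functions of the single quantity $D(\mathbf{f}_i^{(t-1)}, \mathbf{f}_j^{(t-1)})$, so that the stated biconditional reduces to comparing two such distances. The direction on the weights is immediate from equation (\ref{eq:weight}): fixing $i$ and $t$, the normalizer $\alpha_i^{(t)}$ and constant $\epsilon$ do not depend on $j$, so $p_{i,j}^{(t)} < p_{i,k}^{(t)}$ if and only if $D(\mathbf{f}_i^{(t-1)}, \mathbf{f}_j^{(t-1)}) > D(\mathbf{f}_i^{(t-1)}, \mathbf{f}_k^{(t-1)})$.

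The heart of the argument is a clean algebraic identity connecting the expected quadratic score to $D^2$. Plugging (\ref{eq:quadratic_scoring_rule}) into the expectation and using $\sum_e f_{i,e}^{(t-1)} = 1$ yields
\begin{equation*}
\mathbb{E}_{\mathbf{f}_i^{(t-1)}}[R(\mathbf{f}_j^{(t-1)})] = 2\sum_{e=1}^z f_{i,e}^{(t-1)} f_{j,e}^{(t-1)} - \sum_{k=1}^z \bigl(f_{j,k}^{(t-1)}\bigr)^2.
\end{equation*}
Expanding $z\,D(\mathbf{f}_i^{(t-1)},\mathbf{f}_j^{(t-1)})^2 = \sum_k\bigl(f_{i,k}^{(t-1)} - f_{j,k}^{(t-1)}\bigr)^2$ and rearranging, I expect this to collapse to
\begin{equation*}
\mathbb{E}_{\mathbf{f}_i^{(t-1)}}[R(\mathbf{f}_j^{(t-1)})] = \sum_{k=1}^z \bigl(f_{i,k}^{(t-1)}\bigr)^2 - z\,D\bigl(\mathbf{f}_i^{(t-1)}, \mathbf{f}_j^{(t-1)}\bigr)^2.
\end{equation*}
The crucial observation is that the first summand depends only on $i$, not on $j$ or $k$, so when comparing the expected scores achieved by reporting $\mathbf{f}_j^{(t-1)}$ versus $\mathbf{f}_k^{(t-1)}$ it cancels and the comparison is dictated entirely by $D^2$.

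From there the conclusion is immediate: since $D \ge 0$ and $z > 0$, the expected score is a strictly monotone decreasing function of $D$, hence $\mathbb{E}_{\mathbf{f}_i^{(t-1)}}[R(\mathbf{f}_k^{(t-1)})] > \mathbb{E}_{\mathbf{f}_i^{(t-1)}}[R(\mathbf{f}_j^{(t-1)})]$ if and only if $D(\mathbf{f}_i^{(t-1)}, \mathbf{f}_k^{(t-1)}) < D(\mathbf{f}_i^{(t-1)}, \mathbf{f}_j^{(t-1)})$, which by the first paragraph is equivalent to $p_{i,j}^{(t)} < p_{i,k}^{(t)}$.

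There is no real obstacle here beyond carrying out the expansion carefully; the main thing to be pedantic about is that the identity relates the expected score at stage $t-1$ to the RMSD at stage $t-1$, which is precisely the RMSD that equation (\ref{eq:weight}) uses to define $p_{i,j}^{(t)}$, so the indices line up correctly. One minor caveat worth flagging: the biconditional, as stated, fails when $\mathbf{f}_j^{(t-1)} = \mathbf{f}_k^{(t-1)}$ (both sides become equalities), so it is understood as a strict-inequality equivalence on opinion pairs, which is exactly what the monotonicity established above delivers.
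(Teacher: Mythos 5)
Your proof is correct and follows essentially the same route as the paper: both arguments compute $\mathbb{E}_{\mathbf{f}_i^{(t-1)}}[R(\mathbf{f}_j^{(t-1)})] = 2\sum_e f_{i,e}^{(t-1)} f_{j,e}^{(t-1)} - \sum_k \bigl(f_{j,k}^{(t-1)}\bigr)^2$, expand the squared distance, and observe that the term $\sum_k \bigl(f_{i,k}^{(t-1)}\bigr)^2$ is independent of the reported opinion and cancels in the comparison; your packaging of this as the explicit identity $\mathbb{E}[R(\mathbf{f}_j^{(t-1)})] = \sum_k \bigl(f_{i,k}^{(t-1)}\bigr)^2 - z\,D^2$ is just a tidier presentation of the paper's chain of equivalences. (Your closing caveat is unnecessary: when $\mathbf{f}_j^{(t-1)} = \mathbf{f}_k^{(t-1)}$ both strict inequalities are false, so the biconditional holds vacuously rather than failing.)
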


\begin{proof}

Given an opinion $\mathbf{f}_j$, we note that the $\mathbf{f}_i$-expected value of the quadratic scoring rule in (\ref{eq:quadratic_scoring_rule}) can be written as:
\begin{align*}
\mathbb{E}_{\mathbf{f}_i} \left[R(\mathbf{f}_j)\right] 
&=
\sum_{e = 1}^z {f}_{i, e}\, R(\mathbf{f}_j, e) \\
&=
\sum_{e = 1}^z \left(2{f}_{j, e}{f}_{i, e} - {f}_{i, e}\sum_{x=1}^z f_{j, x}^2\right)\\
&=
\sum_{e = 1}^z 2{f}_{j, e}{f}_{i, e} - \sum_{e = 1}^z{f}_{i, e}\sum_{x=1}^z f_{j, x}^2\\
&=
\sum_{e = 1}^z 2{f}_{j, e}{f}_{i, e} - \sum_{x=1}^z f_{j, x}^2\label{EQUATION:proposition}\\
\end{align*}

Now, consider the weights assigned by expert $i$ to the opinions of experts $j$ and $k$ at time $t \geq 1$ according to equation (\ref{eq:weight}). We have that $p_{i,j}^{(t)} < p_{i,k}^{(t)}$ if and only if:

\begin{align*}
&\frac{\alpha_i^{(t)}}{\epsilon+D\left(\mathbf{f}_i^{(t-1)}, \mathbf{f}_j^{(t-1)}\right)} <  \frac{\alpha_i^{(t)}}{\epsilon+D\left(\mathbf{f}_i^{(t-1)}, \mathbf{f}_k^{(t-1)}\right)} &\equiv \\
&D\left(\mathbf{f}_i^{(t-1)}, \mathbf{f}_k^{(t-1)}\right) < D\left(\mathbf{f}_i^{(t-1)}, \mathbf{f}_j^{(t-1)}\right) 
&\equiv\\
&\sum_{x=1}^{z} \left(f_{i,x}^{(t-1)} - f_{k,x}^{(t-1)}\right)^2 < \sum_{x=1}^{z} \left(f_{i,x}^{(t-1)} - f_{j,x}^{(t-1)}\right)^2
&\equiv\\
&\sum_{x=1}^{z} 2f_{i,x}^{(t-1)}f_{k,x}^{(t-1)} -\sum_{y=1}^{z} \left(f_{k,y}^{(t-1)}\right)^2 > 
& \\ 
&\sum_{x=1}^{z} 2f_{i,x}^{(t-1)}f_{j,x}^{(t-1)} -\sum_{y=1}^{z} \left(f_{j,y}^{(t-1)}\right)^2 
&\equiv\\
&\mathbb{E}_{\mathbf{f}_i^{(t-1)}} \left[R\left(\mathbf{f}_k^{(t-1)}\right)\right] > \mathbb{E}_{\mathbf{f}_i^{(t-1)}} \left[R\left(\mathbf{f}_j^{(t-1)}\right)\right]
\end{align*}\end{proof}

Proposition 3 implies that there is a correspondence between weights, assigned according to (\ref{eq:weight}), and expected scores from the quadratic scoring rule: the higher the weight one expert assigns to a peer's opinion, the greater that expert's expected score would be if he reported his peer's opinion, and vice versa. Hence, whenever experts are rational, \textit{i.e.}, when they behave so as to maximize their expected scores, and their opinions are rewarded using the quadratic scoring rule, then the major assumption of the proposed method for finding a consensual opinion, namely that experts prefer opinions that are close to their own opinions, is formally valid. A straightforward corollary of Proposition 3 is that a positive affine transformation of the quadratic scoring rule is still effective with respect to a set of weights assigned  according to (\ref{eq:weight}).

\begin{corollary}
A positive affine transformation of the quadratic scoring rule $R$ in (\ref{eq:quadratic_scoring_rule}), \textit{i.e.}, $x R\left(\mathbf{f}_i, e\right) + y$, for $x > 0$ and $y \in \Re$,  is effective with respect to a set of weights $W = \{p_{1,1}^{(t)}, \dots, p_{1,n}^{(t)}, p_{2,1}^{(t)}, \dots, p_{n,n}^{(t)}\}$ assigned at any time $t \geq 1$ according to equation (\ref{eq:weight}). 
\end{corollary}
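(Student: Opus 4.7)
The plan is to reduce this directly to Proposition 3 by exploiting the linearity of expectation together with the sign-preserving nature of positive affine transformations. Let $R'(\mathbf{f}_i, e) = xR(\mathbf{f}_i, e) + y$ with $x>0$ and $y\in\mathbb{R}$, where $R$ is the quadratic scoring rule from (\ref{eq:quadratic_scoring_rule}).

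First, I would compute the $\mathbf{f}_i^{(t-1)}$-expected value of $R'$ applied to an arbitrary opinion $\mathbf{f}_j^{(t-1)}$. Using the definition of expected value, the linearity of the sum, and the fact that $\sum_{e=1}^{z} f_{i,e}^{(t-1)} = 1$ since $\mathbf{f}_i^{(t-1)}$ is a probability vector, I obtain
\begin{equation*}
\mathbb{E}_{\mathbf{f}_i^{(t-1)}}\!\left[R'(\mathbf{f}_j^{(t-1)})\right] = x\,\mathbb{E}_{\mathbf{f}_i^{(t-1)}}\!\left[R(\mathbf{f}_j^{(t-1)})\right] + y.
\end{equation*}

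Second, I would use this identity to transfer the ordering of expected scores from $R$ to $R'$. Since $x>0$, for any two opinions $\mathbf{f}_j^{(t-1)}$ and $\mathbf{f}_k^{(t-1)}$ we have
\begin{equation*}
\mathbb{E}_{\mathbf{f}_i^{(t-1)}}\!\left[R(\mathbf{f}_k^{(t-1)})\right] > \mathbb{E}_{\mathbf{f}_i^{(t-1)}}\!\left[R(\mathbf{f}_j^{(t-1)})\right] \iff \mathbb{E}_{\mathbf{f}_i^{(t-1)}}\!\left[R'(\mathbf{f}_k^{(t-1)})\right] > \mathbb{E}_{\mathbf{f}_i^{(t-1)}}\!\left[R'(\mathbf{f}_j^{(t-1)})\right],
\end{equation*}
because a strictly increasing affine map preserves strict inequalities and the additive constant $y$ cancels on both sides.

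Finally, I would chain this equivalence with the biconditional already established in Proposition 3, namely $p_{i,j}^{(t)} < p_{i,k}^{(t)} \iff \mathbb{E}_{\mathbf{f}_i^{(t-1)}}[R(\mathbf{f}_k^{(t-1)})] > \mathbb{E}_{\mathbf{f}_i^{(t-1)}}[R(\mathbf{f}_j^{(t-1)})]$, to conclude that $p_{i,j}^{(t)} < p_{i,k}^{(t)} \iff \mathbb{E}_{\mathbf{f}_i^{(t-1)}}[R'(\mathbf{f}_k^{(t-1)})] > \mathbb{E}_{\mathbf{f}_i^{(t-1)}}[R'(\mathbf{f}_j^{(t-1)})]$, which is precisely the definition of $R'$ being effective with respect to $W$. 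There is no real obstacle here; the only subtlety worth stating explicitly is the need for $x>0$ (to preserve the direction of inequalities) and the use of $\sum_e f_{i,e}^{(t-1)} = 1$ (to absorb the additive constant cleanly into the expected value).
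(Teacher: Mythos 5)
Your proof is correct and follows exactly the route the paper intends: the paper states this corollary without proof, calling it a ``straightforward corollary of Proposition 3,'' and your argument --- that $\mathbb{E}_{\mathbf{f}_i^{(t-1)}}[xR+y] = x\,\mathbb{E}_{\mathbf{f}_i^{(t-1)}}[R]+y$ because the probabilities sum to one, so a positive affine map preserves the ordering of expected scores and the biconditional of Proposition 3 carries over --- is precisely the omitted justification. Your explicit attention to $x>0$ and to $\sum_e f_{i,e}^{(t-1)}=1$ fills in the details the paper leaves implicit.
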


\section{Empirical Evaluation}

In this section, we describe an experiment designed to test the efficacy of the proposed method for finding a consensual opinion. In the following subsections, we describe the dataset used in our experiments, the metrics used to compare different methods to aggregate opinions, and the obtained results.

\subsection{Dataset}

Our dataset was composed by 267 games (256 regular-season games and 11 playoff games) from the National Football League (NFL) held between September 8th, 2005 and February 5th, 2006. We obtained the opinions of 519 experts for the NFL games from the ProbabilityFootball\footnote{Available at http://probabilityfootball.com/2005/} contest.  The contest was free to enter. Each expert was asked to report his subjective probability that a team would win a game. Predictions had to be reported by noon on the day of the game. Since the probability of a tie in NFL games is very low (less than $1\%$), experts did not report the probability of such an outcome. In particular, no ties occurred in our dataset.

Not all 519 registered experts reported their predictions for every game. An expert who did not enter a prediction for a game was removed from the opinion pool for that game. On average, each game attracted approximately 432 experts, the standard deviation being equal to 26.37. The minimum and maximum number of experts were, respectively, 243 and 462. Importantly, the contest rewarded the performance of experts via a positive affine transformation of the quadratic scoring rule, \textit{i.e.}, $100 - 400 \times p_l^2$, where $p_l$ was the probability that an expert assigned to the eventual losing team.

A positive affine transformation of a strictly proper scoring rule is still strictly proper \cite{Gneiting:2007}. The above scoring rule can be obtained by multiplying (\ref{eq:quadratic_scoring_rule}) by $200$ and subtracting the result by $100$. The resulting proper scoring rule rewards bold predictions more when they are right. Likewise, it  penalizes bold predictions more when they are wrong. For example, a prediction of $99\%$ earns $99.96$ points if the chosen team wins, and it loses $292.04$ points if the chosen team loses. On the other hand, a prediction of $51\%$ earns $3.96$ points if it is correct, and it loses $4.04$ points if it is wrong. A prediction of $50\%$ neither gains nor loses any points. The experts with highest accumulated scores won prizes in the contest. The suggested strategy at the contest website was ``\textit{to make picks for each game that match, as closely as possible, the probabilities that each team will win}".

We argue that this dataset is very suitable for our purposes due to many reasons. First, the popularity of NFL games provides natural incentives for people to participate in the ProbabilityFootball contest. Furthermore, the intense media coverage and scrutiny of the strengths and weaknesses of the teams and individual players provide useful information for the general public. Hence, participants of the contest can be viewed as knowledgeable regarding to the forecasting goal. Finally, the fact that experts were rewarded via a positive affine transformation of the quadratic scoring rule fits perfectly into the theory developed in this work (see Corollary 2).

\subsection{Metrics}

We used two different metrics to assess the prediction power of different aggregation methods.

\subsubsection*{Overall Accuracy}

We say that a team is the predicted favorite for winning a game when an aggregate prediction that this team will win the game is greater than $0.5$. Overall accuracy is then the percentage of games that predicted favorites have indeed won. A polling method  with higher overall accuracy is more accurate.

\subsubsection*{Absolute Error}

Absolute error is the difference between a perfect prediction (1 for the winning team) and the actual prediction. Thus, it is just the probability assigned to the losing team ($p_l$). An aggregate prediction with lower absolute error is more accurate.

\subsection{Experimental Results}

For each game in our dataset, we aggregated the reported opinions using three different linear opinion pools: the method proposed in Section 3, henceforth referred to as the \emph{consensual} method, with $\epsilon = 10^{-4}$; the traditional \emph{average} approach, where all the weights in  (\ref{eq:linear_opinion_pool}) are equal to $1/n$; and the method proposed by Barlow \textit{et al.} \shortcite{Barlow:1986}, henceforth referred to as the \emph{BMS} method. These authors proposed that the weight assigned to expert $i$'s opinion should be $w_i = \frac{c}{I(\mathbf{f}_i, \mathbf{f}_{i^*})}$, where $c$ is a normalizing constant, $I(\mathbf{f}_i, \mathbf{f}_{i^*})$ is the Kullback-Leibler divergence, and $\mathbf{f}_{i^*}$ achieves $\max\{I(\mathbf{f}_i, \mathbf{f}_j): 1 \leq j \leq n \}$, \textit{i.e.}, $\mathbf{f}_{i^*}$ is the most distant opinion from expert $i$'s opinion. The BMS method produces indeterminate outputs whenever there are probability assessments equal to 0 or 1. Hence, we recalibrated the reported opinions when using the BMS method by replacing 0 and 1 by, respectively, 0.01 and 0.99.

Given the aggregated opinions, we calculated the performance of each method according to the accuracy metrics previously described. Regarding the overall accuracy of each method, the consensual method achieves the best performance in this experiment with an overall accuracy of $69.29\%$. The BMS and average methods achieve an overall accuracy of, respectively,  $68.54\%$ and $67.42\%$.

Table \ref{table:absolute_error} shows the average absolute error of each method over the 267 games. The consensual method achieves the best performance with an average absolute error of $0.4115$. 
We performed left-tailed  Wilcoxon signed-rank tests in order to investigate the statistical relevance of these results. The resulting p-values are all extremely small $\left(< 10^{-4}\right)$, showing that the results are indeed statistically significant.

\begin{table}[t]
\centering
\caption{\textit{Average absolute error of each method over the 267 games. Standard deviations are in parentheses.}}
\label{table:absolute_error}
\begin{tabular}{ccc} 
\hline\noalign{\smallskip}
Consensual & Average & BMS\\
\noalign{\smallskip}\hline\noalign{\smallskip}
0.4115  (0.1813)  & 0.4176 (0.1684)   &  0.4295 (0.1438)\\
\noalign{\smallskip}\hline
\end{tabular}
\end{table}

Despite displaying a decent overall accuracy, the BMS method has the worst performance according to the absolute error metric. A clear drawback with this method is that it only considers the distance to the most distant opinion when assigning a weight to an opinion. Since our experimental design involves hundreds of experts, it is reasonable to expect at least one of them to have a very different and wrong opinion.

The high number of experts should give an advantage to the average method since biases of individual judgment can offset with each other when opinions are diverse, thus making the aggregate prediction more accurate. However, the average method achieves the worst overall accuracy, and it performs statistically worse than the consensual method when measured under the absolute error metric. We believe this result happens because the average method ends up overweighting extreme opinions when equally weighting all opinions.

On the other hand, under the consensual method, experts put less weight on opinions far from their own opinions, which implies that this method is generally less influenced by extreme predictions as illustrated in Section 3.3.

\section{Conclusion}

We proposed a pooling method to aggregate expert opinions. Intuitively, the proposed method works as if the experts were continuously updating their opinions, where each updated opinion takes the form of a linear opinion pool, and the weight that each expert assigns to a peer's opinion is inversely related to the distance between their opinions. We proved that this updating process leads to a consensus.

A different interpretation of the proposed method is that experts reach a consensus regarding the weights of a linear opinion pool.  We showed that if rational experts are rewarded using the quadratic scoring rule, then our major assumption, namely that experts prefer opinions that are close to their own opinions, follows naturally.  To the best of our knowledge, this is the first work linking the theory of proper scoring rules to the seminal consensus theory proposed by DeGroot \shortcite{DeGroot:1974}.

Using real-world data, we compared the performance of the proposed method with two other methods: the traditional average approach and another distance-based aggregation method proposed by Barlow \textit{et al.} \shortcite{Barlow:1986}. The results of our experiment show that the proposed method outperforms all the other methods when measured in terms of both overall accuracy and absolute error.

\bibliographystyle{named}
\bibliography{bibliography}

\end{document}